\newtheorem{theorem}{Theorem}
\newtheorem{definition}{Definition} 
\title{From Space-Time to Space-Order: Directly Planning a Temporal Planning Graph by Redefining CBS}
\author{
    Yu Wu, Rishi Veerapaneni, Jiaoyang Li, Maxim Likhachev
}
\begin{document}

\nocopyright
\maketitle

\begin{abstract}
The majority of multi-agent path finding (MAPF) methods compute collision-free space-time paths which require agents to be at a specific location at a specific discretized timestep. However, executing these space-time paths directly on robotic systems is infeasible due to real-time execution differences (e.g. delays) which can lead to collisions. To combat this, current methods translate the space-time paths into a temporal plan graph (TPG) that only requires that agents observe the order in which they navigate through locations where their paths cross. However, planning space-time paths and then post-processing them into a TPG does not reduce the required agent-to-agent coordination, which is fixed once the space-time paths are computed. To that end, we propose a novel algorithm Space-Order CBS that can directly plan a TPG and explicitly minimize coordination. Our main theoretical insight is our novel perspective on viewing a TPG as a set of \textit{space-visitation order} paths where agents visit locations in relative orders (e.g. 1st vs 2nd) as opposed to specific timesteps. We redefine unique conflicts and constraints for adapting CBS for space-order planning. We experimentally validate how Space-Order CBS can return TPGs which significantly reduce coordination, thus subsequently reducing the amount of agent-agent communication and leading to more robustness to delays during execution.

\end{abstract}

\section{Introduction}\label{sec:intro}
Multi-agent systems were originally constrained to computer games and simulations. However, with the rise of cheap and scalable robotics, there are increasing numbers of real-world multi-agent systems, in particular in drone swarms, robotic warehouses, and search and rescue systems. 

Multi-Agent Path Finding (MAPF) tries to find collision-free trajectories that usually minimize the net travel time of all the agents. Typically, MAPF methods find ``space-time" paths which require agents to be at specific locations at specific timesteps. Additionally, these methods typically do not reason about velocity or kinematic constraints and require the agents to traverse exactly one unit every timestep. Thus, these space-time paths are impossible to directly follow on real systems and do not allow agents to move flexibly. 

The current state-of-the-art method to allow agents to flexibly follow these space-time paths is to convert these space-time paths to a temporal plan graph (TPG) \cite{honig2016tpg}. The main idea is that agents can travel at arbitrary speeds as long as they visit overlapping locations in the same ``temporal precedence ordering" (i.e. relative sequence) as their space-time paths. For example, if there are 3 agents $A,B,C$ which have space-time paths that cross some location $s$ at timesteps 32, 14, 35 respectively, we remove the time dependency and just require them to visit $s$ in the relative order $B,A,C$. Real-world agents can now travel more robust to delays or faster velocities, and only need to wait and coordinate at these overlapped locations.
The current pipeline to plan for a real-world multi-agent planning system is then to plan space-time paths and then post-process it into a TPG, which the real-world robots can flexibly follow.

\begin{figure}[t!]
    \includegraphics[width=0.48\textwidth]{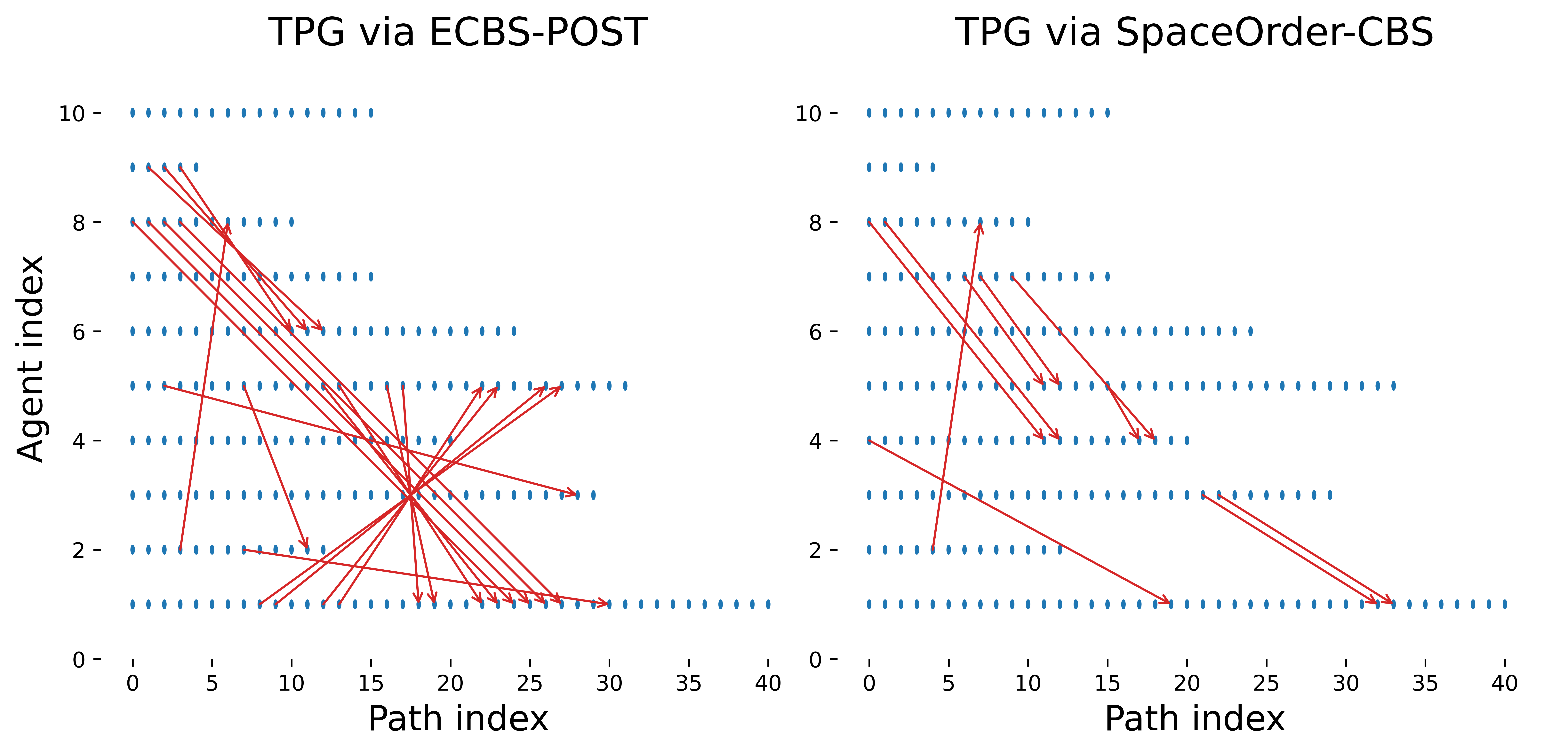}
    \vspace{-1.5em}
    \caption{TPGs on the random-32-32 map. Each row corresponds to an agent's path with each blue dot being a location on the path. Red arrows are Type-2 TPG edges between different agents, indicating coordination and potential waiting. Our method significantly decreases the number of Type-2 edges in the resultant TPG and is more robust to scenarios with delays. Note ``real" TPGs from actual MAPF plans have never been visualized before.}
    \label{fig:motivating}
    \vspace{-1.5em}
\end{figure}

However, this two-step process means that we cannot reason about the coordination required in the TPG that the agents follow, as once the space-time paths are computed, the TPG post-processing is fixed. Prior work \cite{wagnerFLOW} shows that the independently computed space-time paths can have high congestion and result in TPGs that require large coordination. Instead, if we could directly plan a TPG that agents could follow, we could directly reason about and minimize the amount of coordination that the agents require. Reducing coordination minimizes the amount of inter-agent communication needed which is a constraint in real-world multi-agent systems \cite{communicationReview2022} and can potentially improve robustness to delays.

Our core theoretical insight is that we can view an agent's path in a TPG as a set of spaces with relative visitation orders (e.g. visiting $s$ first vs visiting $s$ second). A TPG then consists of a set of these \textit{space-visitation order} paths which satisfy certain criteria. Armed with this knowledge, we can then adapt the general Conflict-Based Search (CBS) framework to create Space-Order CBS by defining unique constraints. SO-CBS can compute space-order paths that satisfy these criteria and produces a valid TPG directly. 

The main potential practical benefit of this method is that we now have the ability to plan paths that explicitly reason about coordinating with other agents. Specifically, when planning a single agent's path in the TPG, we know exactly how many agents we are waiting for at each location. Therefore, instead of possibly waiting for many agents at a location, we can find a different path that reduces the number of agents we need to wait for. This allows us to plan paths \textit{while simultaneously} minimizing coordination which is not possible when planning in space-time and post-processing afterward.
Succinctly, our main contributions are:





\begin{enumerate}
    \item Reinterpreting a TPG in a new perspective as a set of space-order paths that satisfy certain criteria.
    \item Designing Space-Order CBS, which adapts the Conflict-Based Search framework and defines unique constraints while maintaining completeness, to directly plan a TPG.
    \item Reasoning about per-agent coordination \textit{during} the planning process, and demonstrating how this can produce robust TPGs with significantly less coordination.
\end{enumerate}




\section{Related Work}\label{sec:related}
There are many MAPF works that plan space-time paths such as EECBS \cite{li2021eecbs}, BCP \cite{bcp2022}, and MAPF-LNS2 \cite{li2022mapf-lns2}. These methods all assume agents travel at unit velocities and during execution exactly follow their space-time paths. It is initially unclear how to adapt these space-time paths to be used by real robots which can incur execution changes/delays in practice (e.g. a robot temporarily breaking down).


\begin{figure*}[t]
    \centering
    \includegraphics[width=1\textwidth]{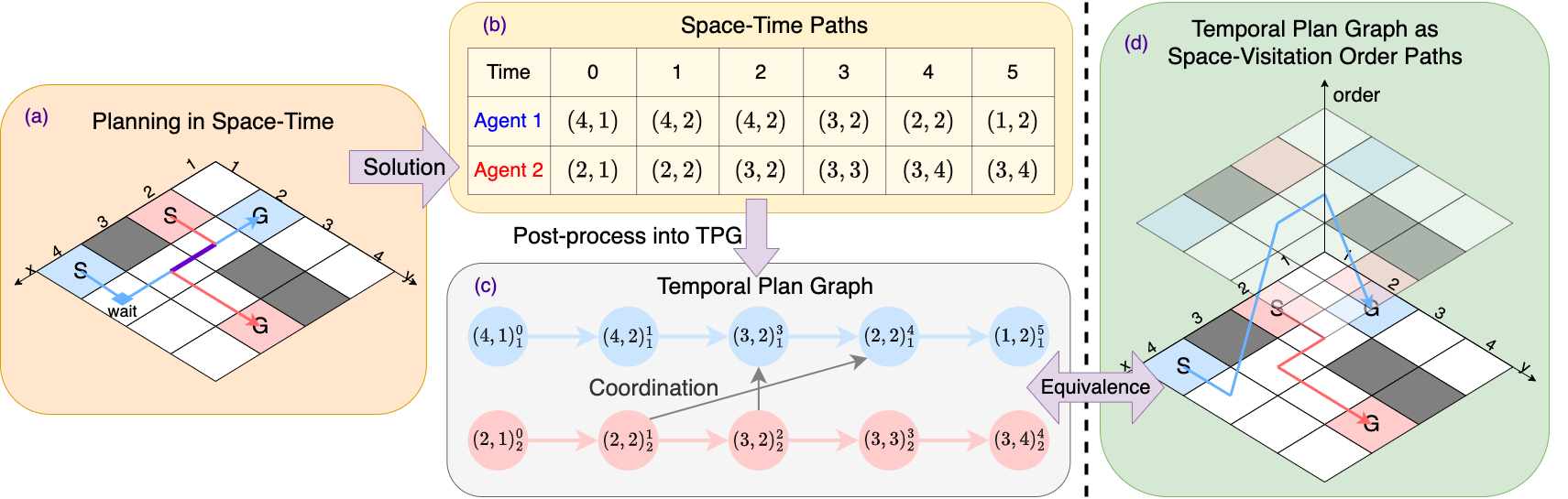}
    \caption{Given a MAPF problem with two agents (a), existing methods compute non-conflicting space-time paths that require agents to be at a specific location at a specific time (b). (a) shows how space-time paths may require agents to wait (e.g. the blue agent waiting at (4,2)), and have spatial overlaps in their paths (denoted by the purple region). To allow agents to handle delays or move at arbitrary velocities, the output space-time paths can be post-processed into a Temporal Plan Graph which removes waits and defines a precedence order (c). The TPG consists of vertices $s^t_j$ representing agent $j$ at location $s$ at timestep $t$, and edges denoting dependencies. Cross edges (gray) in the TPG from $s^t_j \rightarrow s^{t'}_k$, define possible coordination instances where agent $k$ needs to wait for agent $j$ to traverse location $s$ first. These occur at overlapping regions (e.g. purple in (a)) which require coordination to avoid collisions if we want agents to move independently and not timestep aligned. For example in (c), $(3,2)^2_2 \rightarrow (3,2)^3_1$ means that agent 1 needs to wait for agent 2 to leave $(3,2)$ before it can enter. In this paper, we show that by redefining vertices to space-visitation order $(s,r)$ vertices representing the $r^{th}$ agent to visit location $s$, a TPG in MAPF is equivalent to the agents having a set of non-interesting space-visitation order paths (d). This equivalence allows us to directly plan a TPG.}
    
    \label{fig:tpg-example}
    \vspace*{-1.5em}
\end{figure*}

MAPF-POST \cite{honig2016tpg} solves this problem by introducing the concept of a temporal plan graph (TPG). At its core, given a set of space-time paths, MAPF-POST constructs a TPG by removing the temporal component and just maintaining the relative precedence order (i.e. sequence) of agents visiting overlapping locations. During execution, agents are allowed to run arbitrarily fast or slow as long as they maintain their relative order as demonstrated in Figure \ref{fig:tpg-example}. Our key contribution is showing that instead of planning space-time paths and post-processing to a TPG, we can directly plan a TPG by viewing it from a unique perspective.


Explicitly reasoning about possible coordination in MAPF has generally been underexplored. MCP \cite{mapfdelayprob} post-processes a given TPG to reduce communication requirements. Concurrent work introduces a Bidirectional TPG \cite{su2024bidirectional} which also post-processes a given TPG to reduce communication requirements. These works are orthogonal of our objective of trying to directly compute a TPG with less coordination instead of post-processing them. A few works explicitly reason about K timestep delays by planning ``K-Robust" space-time paths \cite{atzmon2018krobust}, but are at the end of the day still planning space-time paths. \cite{offlinetimeind} introduces a new variant of MAPF problem where agents must find paths such that they can be traversed regardless of how agents are scheduled. This is over-restrictive in practice as it assumes agents cannot coordinate at all with each other, e.g. this problem setting will never allow two agents to cross a corridor at different timesteps and will fail if this is necessary.

The closest method which inspired our work is Space-Level Conflict-Based Search (SL-CBS) \cite{wagnerFLOW}. SL-CBS plans in ``space-levels" and from $(s,l)$ can choose to remain in the same level $(s',l)$ or increase one level to $(s,l+1)$. Within each level $l$, agents have collision-free path segments that are independent of each other and can move arbitrarily fast or slow without coordination.
Thus agents only require coordinating at transitions between different levels.
However, SL-CBS does not reason about the number of agents at each coordination instance and instead requires each agent to wait on all other agents. 
By planning a TPG, we can reason about which specific agents to wait on rather than waiting on all of them.

\section{Preliminaries}
\subsection{Space-Time MAPF Plan}
In a MAPF scenario with a total of $K$ agents, a space-time MAPF plan is a set of $K$ paths $\{ P_j | j=0,1,...,K-1\}$. Each path $P_j$ for agent $j$ is defined as an ordered sequence of space-time vertex $P_j = [ v_j^0, v_j^1, ..., v_j^\tau, ..., v_j^T]$, where $\tau$ is the index of vertex $v_j^\tau$ in the path sequence, $T$ is the last index in the path. Each vertex can be represented as a space-time tuple $(s, t)$, where $s$ denotes the space (i.e. location) and $t$ denotes the time step. In the space-time framework, agents $\tau=t$. We explicitly discriminate between path index and time for reasons that will be clear when we introduce space-order paths in Section \ref{sec:allcriteria}. 
The vertices in a path follow the transition rule that given a vertex $v_j^\tau = (s, t)$, the next vertex is $v_j^{\tau + 1} = (s', t+1)$, where $s'$ can either be a neighboring location of $s$ or $s$ itself. Such a space-time MAPF plan is a valid solution to a MAPF scenario if it is collision-free and all paths start and end at desired locations.


\subsection{Edge-Perspective TPG}
\citet{honig2016tpg} describes how to construct a TPG with ``Type-1" and ``Type-2" edges given a valid space-time MAPF plan defined above.
Conceptually, a TPG is a dependency graph that defines relative sequence of actions agents must follow. Each vertex $v^\tau_j$ defines the event of agent $j$ visiting its $\tau^{th}$-index location $s(j, \tau)$. 
Each edge denotes a dependency, e.g. the edge $v^2_j \rightarrow v^3_j$ represents how agent $j$ must first visit its $2^{nd}$ location before its $3^{rd}$ one. ``Type-1" edges are defined along an agent's path representing how it must travel its path sequentially. Thus, each agent has a corresponding set of vertices and Type-1 edges based on its path with waits removed, as depicted by the blue and red rows in Figure \ref{fig:tpg-example}(c).

The second relative sequence of actions occurs when two agents $j \neq k$ traverse the same location $s$ at different times $t_j < t_k$. They add a ``Type-2" edge from $(s, t_j) \rightarrow (s, t_k)$, denoting how the second agent $k$ must wait for the first agent $j$ to cross $s$ first. 

Note that Type-2 edges disallow following. If the first agent $k$ leaves location $s$ at timestep $t_{exc}$ during execution, the second agent $j$ can only enter location $s$ starting from $t_{exc}+1$.
Figure \ref{fig:tpg-example}(c) shows two examples of these Type-2 edges which define where agent-agent coordination is required to prevent collisions. Agents maintaining the TPG dependencies are guaranteed to reach their targets without deadlock or collision.

\section{Space-Order Perspective of a TPG}
This section formally shows how a TPG can be interpreted as a set of space-visitation order paths that satisfy specific criteria. Afterwards, we formally define coordination in the TPG context and relate it to space-visitation order. 

\subsection{Vertex-Perspective TPG}
\label{sec:allcriteria}
Instead of viewing a TPG from an ``edge" perspective as is done in \citet{honig2016tpg}, we choose to redefine its vertices and view it from a vertex perspective. Instead of a vertex $v^\tau_j = (s, t)$, we define vertices $v^\tau_j = (s,r)$, where $r$ represents the visitation order. For example, $v^\tau_j = (s, r=0)$ means agent $j$ is the first agent that visits $s$; $r = 1$ corresponds to the second agent to visit, and so on. An agent with a higher order must wait for all other agents who have a lower order at the same location. For example, if agent $j$ and $k$ visit $s$ at order $r_j$ and $r_k$ respectively, $r_j < r_k$ would imply agent $k$ must wait for agent $j$ to cross $s$ before it can enter $s$. Throughout the paper, we will use subscripts of $r$ to index the agent, and superscripts of $r$ to denote different locations. 

A vertex-perspective TPG is then a set of space-order paths $\{P_j\}$ that satisfy specific criteria. Each path is a sequence of space-order vertices $P_j = [ v_j^0, v_j^1, ..., v_j^\tau, ..., v_j^T]$, where $v_j^\tau = (s, r)$. The vertices in a path follow the transition that given a vertex $v_j^\tau = (s, r)$, the next vertex is $v_j^{\tau + 1} = (s', r')$, where $s'$ must be a neighboring location of $s$, $r'$ could be greater than, less than, or equal to $r$. 

A key difference between space-order paths compared to space-time paths is the transition rules. In the space-time case, the time step of vertices increases at a constant rate, implying that the order that agents visit each location is determined solely by the path length. In comparison, there is no restriction on the transition of order $r$ in space-order paths. This flexibility enables us to explore different possible visitation orders at each location during planning. In addition, unlike space-time paths that explicitly contain a waiting action, waiting action is not contained in the space-order transition rule, as it is modeled implicitly by the coordination between agents. Figure \ref{fig:tpg-example}(d) shows how the TPG in (c) can be reinterpreted as $(s,r)$ vertices.


A valid TPG is a TPG that ensures all agents can reach their target locations without any collision or deadlock. To formalize the conditions of a valid TPG, we define the following criteria. Firstly, the most obvious condition is that two distinct agents cannot share the same order at the same location. Otherwise, they might reach that location at the same time during execution and collide.
\begin{definition}[Vertex Criteria] \label{def:cVertex} Agents must have unique $(s,r)$ vertices in their paths.
\end{definition}

Since each agent starts at their start location, they must hold the lowest order of their start locations. For similar reasons, each agent $j$ should also hold the highest order at their goal location where they rest indefinitely. Otherwise, this means there is another agent $k$ that visits the goal location after $j$ rests at it, which is a contradiction. Thus a TPG's space-order paths must satisfy the following Start and Target Criteria:
\begin{definition}[Start Criteria] \label{def:cStart}
An agent $j$  must be the first agent to occupy it's start location $s_j^{start}$ compared to all other agents $k$. Thus $r_j < r_k, \forall k \neq j$.
\end{definition}
\begin{definition}[Target Criteria] \label{def:cTarget}
An agent $j$ resting at its goal location $s_j^{goal}$ must be the last agent to reach there compared to all other agents $k$. Thus $r_j > r_k, \forall k \neq j$.
\end{definition}

\begin{figure}[t!]
    \centering
    \includegraphics[width=0.47\textwidth]{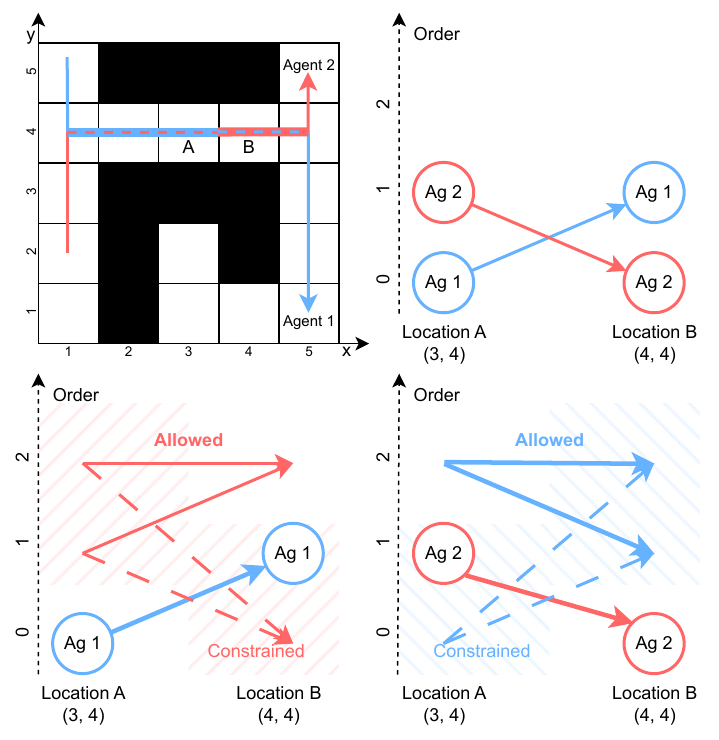}
    \caption{Example edge conflict. Agent 1 (blue) and agent 2 (red) need to cross a corridor in the same direction. The top-left subplot shows the paths of two agents. Solid line segments represent the corresponding agent hold order $r=0$ at the location; dashed line segments represent the corresponding agent hold order $r=1$. In the first half of the corridor, agent 2 has a higher order and must follow agent 1. However, the relative order of two agents swap at the edge $(3,4) \rightarrow (4,4)$. We label them as location $A$ and location $B$. The top-right subplot shows the space-order vertices at that particular edge. This edge has an edge conflict as during execution agent 1 wait at location $A$ for agent 2 to visit $B$, but agent 2 would be stuck behind agent 1 before $A$; thus both agents would wait forever. The bottom row shows two edge constraints to resolve this conflict. The dashed edges are forbidden while the solid edges are allowed. If both ends of an arrow are within the shaded area, the corresponding edge is disallowed.}
    \label{fig:edge_conf}
    \vspace{-1em}
\end{figure}


Now we consider possible deadlock situations. Consider the example depicted in the top-left subplot of Figure \ref{fig:edge_conf}, where agent 1 and agent 2 traverse the same edge $A \rightarrow B$ in the same direction. Agent 2 (red) starts with a higher order than Agent 1, and is thus following it, up till location $A$. However at $B$, all of a sudden Agent 1 starts following 2. This is physically impossible as this means that Agent 2 overtook Agent 1 between $A$ and $B$, but there exists no intermediate location (since they are consecutive locations) for this overtake to take place. Due to this occurring from transitions between two consecutive locations ($A, B$), we define the following Edge criteria which a valid TPG must have to avoid such scenarios.


\begin{definition}[Edge Criteria] \label{def:cEdge} 
Case 1: If both agents $j, k$ traverse $s, s'$ consecutively, i.e. $j$ has $(s,r_j) \rightarrow (s',r'_j)$ and $k$ has $(s,r_k) \rightarrow (s',r'_k)$, their relative order at $s$ and $s'$ must be the same, i.e. $sgn(r_j - r_k)$ must equal $sgn(r'_j - r'_k)$.
Case 2: Similarly, if two agents swapping consecutive locations, i.e. $j$ has $(s,r_j) \rightarrow (s',r'_j)$ and $k$ has $(s',r'_k) \rightarrow (s,r_k)$ (note $s,s'$ swapped), then $sgn(r_j - r_k)$ must equal $sgn(r'_j - r'_k)$.
\end{definition}

\begin{figure}[t!]
    \centering
    \includegraphics[width=0.47\textwidth]{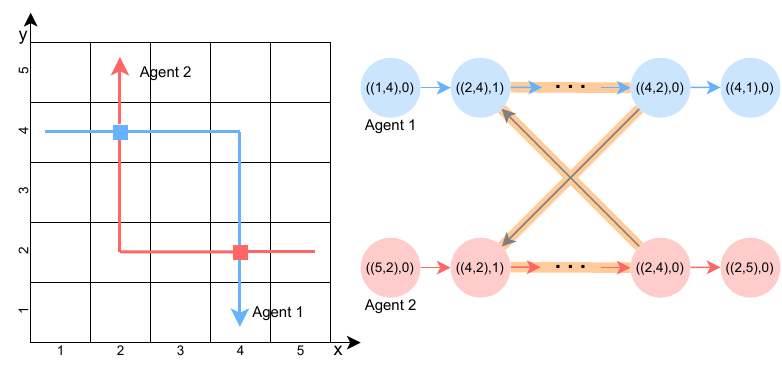}
    \vspace{-1.2em}
    \caption{Example deadlock cycle. The left subplot shows the space-order paths of two agents. All vertices in the paths have default order $0$, except that vertices marked with a colored square represent the corresponding agent has order $1$ at that location. e.g. at $(2,4)$ agent 1 has order $1$ and agent 2 has order $0$. The right subplot shows the corresponding TPG. Colored edges are Type-1 edges. Grey cross-edges are Type-2 edges specifying dependency at overlapping locations. There exists a cycle $((2,4),1) \rightarrow ... \rightarrow ((4,2),0) \rightarrow ((4,2),1) \rightarrow ... \rightarrow ((2,4),0) \rightarrow ((2,4),1)$ which we highlight in orange. The cycle consists of two Type-2 edges and two sequences of Type-1 edges. The cycle implies a deadlock. Agent 1 will stuck waiting to enter $(2,4)$, while agent 2 will stuck waiting to enter $(4,2)$.  }
    \label{fig:deadlock_example}
    \vspace{-1em}
\end{figure}

Deadlocks could happen at non-neighboring locations as well. Existing work \cite{berndt2020feedback} proves cycles in TPG lead to deadlocks. 
Consider the space-order paths of two agents. Agent 1 needs to wait for agent 2 at location $(2,4)$, while agent 2 must wait for agent 1 at $(4,2)$. However, agent 1 cannot reach $(4,2)$ to resolve the waiting for agent 2. 
The same applies to agent 2, leading to both agents being stuck in a deadlock. The fundamental reason is that there exists a cycle in the TPG. Since an edge in TPG represents dependency between vertices it connects, all vertices inside the cycle represent an event that is a pre-requisite of itself. Those events can never happen by the definition of TPG, implying a deadlock. Such deadlocks can involve more than two agents. We formally define it as:
\begin{definition}[Cycle Criteria] \label{def:cDeadlock}
A deadlock cycle of $N$ agents consists of $N$ Type-2 edges (i.e. coordination) at $N$ different locations.
Let $S = \{s^n | n = 0, 1, ... N-1 \}$ be a set of $n$ distinct locations and $A = \{a_n | n = 0, ... N-1\}$ a set of distinct agents. Index $n$ is in module $N$. Let $r_m^n$ be the order of agent $a_m$ at location $s^n$. Sets $S$ and $A$ form a deadlock cycle if
\begin{enumerate}
    \item $\forall n$, agents $a_n, a_{n-1}$ visit $s^n$ and $r_n^n > r_{n-1}^n$. That is, agent $a_n$ must wait for agent $a_{n-1}$ to enter location $s^i$.
    \item $\forall n$, agent $a_n$ visits $s^n$ then $s^{n+1}$. Combined with the previous condition, this implies agent $a_n$ must first wait for agent $a_{n-1}$, then enter $s^n$, and finally visit $s^{n+1}$.
\end{enumerate}
\end{definition}
By condition 1, agent $a_{N-1}$ must wait for $a_{N-2}$ to leave $s^{N-1}$ before it can enter due to the Type-2 edge. By condition 2, $a_{N-2}$ must \emph{first} visit $s^{N-2}$ before going through $s^{N-1}$ so that $a_{N-1}$ moves on. However, by condition 1, $a_{N-2}$ must wait for $a_{N-3}$ at $s^{N-2}$. Similarly applying condition 2 and 1 iteratively, $a_{N-3}$ must wait for $a_{N-4}$ at $s^{N-3}$, etc. This chain of waiting goes on and eventually leads to $a_0$ waiting for $a_{N-1}$ at $s^0$, but remember $a_{N-1}$ is stuck waiting from the beginning! This forms a cycle of agents stuck waiting for each other at locations $S$.

If an agent has $(s,r=1)$, it is the 2nd agent to visit $s$ and there must be a different agent $k$ which visited $s$ first and occupies $(s,r=0)$. Intuitively, a space-order TPG should not contain any ``empty slots" .
\begin{definition}[Relative Criteria] \label{def:cRelative}
If an agent $j$ has an order $r_j > 0$ at some location $s$, there must be another agent $k$ with order $r_j-1$ at location $s$. 
\end{definition}

Altogether, from this vertex perspective, a valid space-order TPG consists of (location, visitation order) vertices that start and end at desired locations, and satisfy all criteria. Note regular space-time planning has a similar vertex and target criteria, but that the others are unique to TPGs.

\subsection{Coordination in TPG} \label{sec:coordination}
Coordination between agents is required when multiple agents visit a location. If agent $j$ visits vertex $(s, r)$, it need to wait for, or coordinate with $r$ agents at location $s$. One natural objective is to minimize the total number of these possible wait interactions. In the edge-perspective TPG proposed by \cite{honig2016tpg}, each wait interaction is represented by a Type-2 edge. Therefore, this objective is equivalent to optimize the total number of Type-2 edges in TPG. For example Figure \ref{fig:tpg-example}(c) contains two such instances, requiring that the red agent must traverse $(2,2)$ and $(3,2)$ first before the blue agent.

\begin{definition}[Total Coordination] \label{def:cTotal}
Given a single space-order vertex $v_j^\tau = (s, r)$, its total coordination is $C_{total}(v_j^\tau) = r$. The total coordination of a TPG is the sum of the coordination of all vertices in all agents' paths: 
$C_{total} = \sum_j \sum_\tau C_{total}(v_j^\tau)$.
\end{definition}

This objective could be over-penalizing certain coordination, though. For example, if agent $j$ and $k$ shares a long segment of spatial path of length $L$, $C_{total}$ will penalize this situation by $L$. This could be unnecessary. Therefore we propose unique coordination as an alternative objective, which will only penalize the above situation by a value of $1$.
\begin{definition}[Unique Coordination] \label{def:cUnique}
For agent $j$'s path $P_j$, unique coordination $C_{unique}(P_j)$ is the number of other agents whose paths overlap with $P_j$ at least once. The unique coordination of a TPG is $C_{unique} = \sum_j C_{unique}(P_j)$.
\end{definition}

In the example shown in Figure \ref{fig:tpg-example}, $C_{total}=2$ while $C_{unique}=1$. In the rest of the paper, we represent coordination as $C$ in general, while $C$ can be either $C_{total}$ or $C_{unique}$.

\section{Space-Order CBS}
We have shown that we can view the TPG as a set of space-order paths which satisfy certain criteria and we have defined coordination in a TPG. Instead of planning space-time paths and post-processing to get the $(s,r)$ TPG vertices, our objective is to directly plan a set of space-order paths that represent a valid TPG. During planning, we can minimize coordination by reasoning about $C_{total}$ or $C_{unique}$. To that end, we designed Space-Order Conflict-Based Search (SO-CBS).

Conflict-Based Search is a popular weakly complete and optimal space-time MAPF planner \cite{sharon2015cbs}. The CBS framework employs a high-level constraint tree (CT) that searches over conflicts in a best-first manner, and a low-level A* search that searches individual paths while respecting constraints. 
Our method, Space-Order CBS, redefines the CBS framework to plan a TPG by planning conflict-free space-order paths instead of space-time trajectories. Following CBS's framework, we plan paths for agents individually and resolve conflicts (i.e. violations of Definition \ref{def:cVertex}-\ref{def:cRelative}) by adding constraints. We design novel space-order constraints that maintain completeness.
This section discusses the necessary components for Space-Order CBS.

\subsection{Low-Level Planner} \label{sec:lowlevelpl}
Our low-level planner must search for a valid space-order path. All agents are initialized at their start location with $r=0$. Unlike space-time where $(s,t)$ edges go to $(\{s,s'\},t+1)$, space-order can go from $(s,r)$ to $(s',r')$ where $r'$ can be $>,<,=$ to $r$ and $s' \neq s$.  We determine possible $r'$ by computing $R'_{max}$, the maximum order at $s'$ among all existing paths of other agents. The planner inserts all unconstrained $(s',r'), r' \in \{0, 1, ..., R'_{max}+1\}$. Note if $s'$ is the start location of another agent $j$, we omit $(s',r=0)$ since $j$ must start at $(s',0)$. Since each vertex can transit to multiple orders for the same neighbor, space-order search has a larger action space than space-time search.

The low-level planner finds the minimum-cost path for agent $k$ from start to goal location that satisfies the applied CT constraints defined in the next section. Section \ref{sec:coordination} defines the coordination cost $C_k$ which the low-level planner could minimize. However, this can significantly increase the path length and thus overall plan execution time. Therefore, we minimize $w * C(P_k) + (1-w) * |P_k|$, a weighted sum of coordination $C(P_k)$ and path length $|P_k|$. The hyper-parameter $w$ trades off the two with $w = 1$ representing purely minimizing coordination while $w=0$ represents solely minimizing path length. 

\subsection{Criteria, Conflicts, \& Constraints}
\label{sec:all-constraints}
Conflict-Based Search works by detecting conflicts that result in invalid paths and defining and applying constraints that resolve these conflicts. Section \ref{sec:allcriteria} defines the criteria that space-order paths need to satisfy to define a valid TPG; a conflict occurs when any of these criteria are violated. Each criterion thus defines a conflict and requires the following constraints to resolve. We do not have a constraint for the start criterion as we implicitly enforce it by initialization specified in Section \ref{sec:lowlevelpl}.

\begin{definition}[Vertex Constraint]
    This occurs when two agents $j,k$ share the same $(s,r)$ location, violating Definition \ref{def:cVertex}. We add a constraint on $j$ to avoid $(s,r)$ for one child node, and a constraint on $k$ likewise (similar to resolving a regular space-time vertex conflict).
\end{definition}

\begin{definition}[Target Constraint]
    This occurs when agent $k$ passes over agent $j$ resting at its goal location $(s_j^{goal},r_j)$ violating Definition \ref{def:cTarget}. We then constrain $j$ to reach and rest at $(s_j^{goal},r \leq r_j)$ and have no other agent (including $j$) with $(s_j^{goal},r > r_j)$, or constrain $j$ to reach and rest at $(s_j^{goal},r > r_j)$.
    \label{def:target-constraint}
\end{definition}

Now let's return to the example in the top-right subplot of Figure \ref{fig:edge_conf}. As we discussed in Definition \ref{def:cEdge}, the main problem of an edge conflict is that two agents have different relative priorities on two consecutive locations. To resolve it, our contraints should disallow swapping order. To that end, our constraint should encourage $r^A_1$ to increase and $r^B_1$ to decrease. So we disallow agent 1 to traverse edge $(s^1, r \le 2) \rightarrow (s^2, r \ge 2)$ (bottom-right of Figure \ref{fig:edge_conf}). Reasoning for agent 2 is similar. These constraints would still allow some edge conflicts to occur at the same locations, which would be resolved by a finite number of future constraints. This is necessary for the completeness of the search (see Theorem \ref{thm:completeness} and Appendix for details). In general, we define edge constraint as the following: 
\begin{definition}[Edge Constraint]
    In the case of Definition \ref{def:cEdge}.1, two agents $j,k$ swap relative orders in consecutive locations in the same direction, e.g. $(s,r_j) \rightarrow (s',r'_j), (s,r_k > r_j) \rightarrow (s',r'_k < r'_j)$
    , violating Definition \ref{def:cEdge}. 
    Instead, we add the following child constraints: 
    (a) Constrain $j$ to avoid the set of edges $(s, r \le r_j) \rightarrow (s',r \ge r'_k)$. 
    (b) Constrain $k$ to avoid set of edges $(s,r \ge r_j) \rightarrow (s', r \le r'_k)$.
    The analogous case in Definition \ref{def:cEdge} where two agents travel in opposite directions follows without loss of generality.
\end{definition}

Intuitively, a deadlock cycle can be resolved by breaking any of the edges in the cycle. In the example in Figure \ref{fig:deadlock_example}, we can resolve the cycle by reversing the visitation order at any of the coordination places. i.e. if agent 1 must wait for agent 2 at both $(4,2)$ and $(2,4)$, or vice versa, there will not be a deadlock. This is equivalent to reversing the direction of one of the Type-2 edges Alternatively, we can break the sequence of Type-1 edges $((2,4),1) \rightarrow ... \rightarrow ((4,2),0)$, i.e. disallow agent 1 to first visit $(2,4)$ then visit $(4,2)$. This also resolves deadlock. The same applies to agent 2. More generally, we define constraints to resolve deadlock cycle with $N$ agents as:
\begin{definition}[Cycle Constraints]
    Given a deadlock cycle in \ref{def:cDeadlock} involving $N$ Type-2 edges: $(s^n, r^n_{n-1}) \rightarrow (s^n, r^n_n)$ where $n = 0, 1, 2, ..., N-1$ and $r^n_{n-1} < r^n_n$, we can break the cycle if any of the edges is broken. So we add 2 constraints for each $i$:
    (a) Agent $n-1$ cannot go to $(s^n, r < r^n_n)$ if it enter location $s^{n-1}$ before $s^n$.
    (b) Agent $n$ cannot go to $(s^n, r \ge r^n_n)$.
    \label{def:deadlock-constraint}
\end{definition}
There are $2N$ possible constraints, resulting in $2N$ CT nodes. Comparing regular conflicts with a branching factor of $2$ in CBS, deadlock conflicts of $N$ agents lead to a $2N$ branching factor, which is very expensive to resolve.

\begin{definition}[Relative Constraint]
    Following Definition \ref{def:cRelative}, a relative conflict occurs when agent $j$ occupies $(s,r_j)$, but there is no agent at $(s,r-1)$ (e.g. dashed circles in Figure \ref{fig:edge_conf}). We must either constrain $j$ to avoid all $(s,r \ge r_j)$, or constrain another agent $k$ to use $(s,r_j-1)$ in its path.
\end{definition}
Maintaining the relative criteria is hard as the relative constraint needs to enforce some unknown agent $k$ must occupy $(s,r_j-1)$. Resolving this failed criteria for MAPF with $K$ agents requires generating $K$ constraints and children for the CT node, which would be computationally infeasible. Therefore, we ignore this criterion and allow an agent at $(s,r)$ while no other agent is at $(s,r-1)$. This TPG will still maintain a valid precedence order but the possible unused $(s,r)$ makes the coordination cost $C_{total}$ no longer accurate. Thus even though an optimal Space-Order CBS exists, a practical one cannot have these $N$ children.

\begin{theorem}[Completeness] \label{thm:completeness}
    Space-Order CBS is weakly complete (will eventually find a solution if it exists).
\end{theorem}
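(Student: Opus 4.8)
The plan is to mirror the classical weak-completeness argument for CBS \cite{sharon2015cbs} and adapt it to the space-order setting. The argument rests on two pillars: \emph{soundness} (any conflict-free leaf of the constraint tree is a valid TPG) and \emph{reachability} (if a valid TPG exists, the best-first high-level search reaches such a leaf after finitely many node expansions). Soundness is almost immediate from Section \ref{sec:allcriteria}: a leaf is returned only when no vertex, target, edge, or cycle conflict is detected, and by Definitions \ref{def:cVertex}, \ref{def:cTarget}, \ref{def:cEdge}, and \ref{def:cDeadlock} this exactly characterizes a valid TPG (the start criterion of Definition \ref{def:cStart} is enforced by initialization, and the relative criterion of Definition \ref{def:cRelative} is intentionally dropped and does not affect validity). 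So I would spend most of the effort on reachability.

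For reachability, the central device is the notion that a CT node is \emph{consistent} with a fixed valid TPG $\Pi$ if $\Pi$ satisfies all constraints stored at that node. The root stores no constraints, so it is consistent with every valid TPG. The first key step is a \emph{no-solution-lost} lemma: if a node $N$ is consistent with $\Pi$ but its low-level paths still contain a conflict, then at least one child of $N$ remains consistent with $\Pi$. I would verify this conflict by conflict. For vertex, target, and cycle conflicts the child constraints form a disjunctive cover of the feasible space --- e.g. a cycle is broken by reversing any one of its Type-2 edges, and since $\Pi$ is cycle-free it must already satisfy one of the $2N$ alternatives, so one child keeps it consistent. The edge case is the delicate one: I must check that any $\Pi$ obeying the Edge Criteria (Definition \ref{def:cEdge}) necessarily satisfies child constraint (a) or (b), even though these constraints only partially forbid the offending configuration. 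Establishing this lemma yields an infinite descending chain of $\Pi$-consistent nodes, \emph{or} termination at a $\Pi$-consistent conflict-free leaf, which by soundness is a valid TPG.

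The hard part will be ruling out the infinite chain, i.e. proving that the $\Pi$-consistent branch terminates. Unlike space-time CBS, where the time horizon for a bounded-cost solution is finite and the per-agent state space is therefore finite, here the visitation order $r$ is a priori unbounded, so I cannot appeal directly to a finite search space. My plan is to first bound the horizon: since the high level is best-first in cost and $\mathrm{cost}(\Pi)$ is finite, every expanded node has cost at most $\mathrm{cost}(\Pi)$, which caps path length and hence the number of (agent, visit) events at any location; the order $r$ at a location never needs to exceed this event count, and the low level only ever introduces orders up to $R'_{max}+1$. With path length and order both bounded, the number of \emph{distinct} constraints over the space-order domain is finite. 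Because constraints strictly accumulate down any branch and each child forbids the current (sub-optimal) low-level paths, no constraint set can repeat, so the $\Pi$-consistent chain cannot be infinite. The one place this needs care is the edge conflict that the paper deliberately leaves only partially resolved: I would show that each recurrence of an edge conflict at a fixed pair of consecutive locations strictly widens the gap between the two agents' orders there, a monotone quantity bounded above, so only finitely many such recurrences are possible --- this is precisely the ``finite number of future constraints'' claimed in Section \ref{sec:all-constraints}.

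Finally, I would note that each node is processed in finite time, since the branching factor is finite everywhere (two children for vertex, target, and edge conflicts, and $2N$ children for an $N$-agent cycle conflict) and the bounded low-level state space makes each A* call terminate. Combining finite branching, finite per-node work, and a finite $\Pi$-consistent chain, the best-first high-level search must expand the terminal conflict-free node after finitely many expansions, establishing weak completeness. I expect the termination argument for the partially-resolved edge conflict to be the main obstacle, as it is the only conflict whose constraints do not eliminate the offending configuration outright.
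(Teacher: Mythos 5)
Your proposal follows essentially the same route as the paper's proof: show that the root node admits any valid TPG, that each conflict's constraints retain a valid TPG in at least one child (verified conflict-by-conflict, with the edge constraints as the delicate case), and conclude via best-first high- and low-level search over a solution of finite cost. The only substantive difference is that you are more careful about termination --- bounding the visitation orders via the cost bound and arguing that partially-resolved edge conflicts can recur only finitely often --- where the paper simply asserts that the best-first searches with an increasing cost function will eventually examine the finite-cost solution.
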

\begin{proof}
    Space-Order CBS's main proof of completeness comes by proving that if a solution exists in a CT node, for each conflict resolved via constraints, a solution exists in at least one child node. The proof for Vertex and Target Constraints follows regular space-time CBS. The proofs for Edge and Deadlock Constraints are shown to retain a solution in at least one of their child nodes in the appendix.
    Thus Space-Order CBS will always contain the solution in the CT as the root allows the solution (no constraints) and applying constraints retains a solution that exists in at least one child node. Since Space-Order CBS high-level and low-level searches are best first with respect to an increasing cost function, and a valid TPG has a finite cost, the search will eventually examine that solution.
\end{proof}

We re-emphasize that SO-CBS is also not optimal due to the relative constraints (Section \ref{sec:all-constraints}).
Nevertheless, we still use the CBS framework as it allows for fine-grained conflict resolution and is still complete and therefore searches through more paths. Thus SO-CBS produces better solutions than greedier/incomplete alternatives like prioritized planning which produce longer paths. 

\subsection{High-Level Planner}

The high-level planner chooses the minimum cost set of paths, detects conflicts (Section \ref{sec:allcriteria}), applies constraints (Section \ref{sec:all-constraints}), and generates child nodes by calling the low-level planner (previous paragraph \& Section \ref{sec:lowlevelpl}). Similar to the low-level planner, the high-level planner could minimize just coordination but runs into the same issues of long path lengths. Therefore it prioritizes the same weighted objective $w * C + (1-w) * \sum_k |P_k|$. 

\subsection{Modifying Space-Time CBS Improvements}
We adapt the following CBS improvements out-of-the-box for Space-Order CBS to speed up the search.

\emph{Enhanced CBS} speeds up search by finding $w_{so} > 1$ suboptimal solutions by using focal searches \cite{focal1982} in both the high-level and low-level planners to select nodes with fewer conflicts \cite{barer2014suboptimal}. We use this technique without modifications in Space-Order CBS.

\emph{Bypassing conflicts} is a technique that tries to reduce the branching factor of the CT \cite{bpEli2015}. When expanding a CT node, the bypass technique replaces the node with a child node instead of generating two branches iff the following conditions are true: (1) cost does not increase (2)  number of conflicts decreases. In suboptimal search, the first condition is replaced by cost of the child node being within the suboptimal bound. This technique effectively reduces the size of CT and thus speeds up the search. This logic can be directly applied to SO-CBS without modifications.

\emph{Target reasoning} speeds up resolving conflicts at goal locations by avoiding multiple expansions on conflicts at the same goal location \cite{srli2021}. We have directly incorporated it in Definitions \ref{def:cTarget} and \ref{def:target-constraint}.

\section{Improving SO-CBS}
It is not clear \textit{a priori} that Space-Order CBS will be particularly slow, but when running it we found that it struggles for small numbers of agents. Additionally, the produced TPGs are inefficient and contain many unnecessary waiting. Therefore, we propose two improvements to the algorithm: avoiding inversions and introducing negative order.

\subsection{Avoiding Inversions}
Simply optimizing the coordination objectives and path lengths does not guarantee to produce efficient TPG. 
Suppose agent $j, k$ visit location $s$ at space-order vertex $v_j^{\tau_j} = (s, r_j)$ and $v_k^{\tau_k} = (s, r_k)$, respectively. Assume $\tau_j < \tau_k$ but $r_j > r_k$. $r_j > r_k$ implies agent $j$ must wait for agent $k$. However, $\tau_j < \tau_k$ means agent $j$ travels less distance to enter $s$ than agent $k$. Agent $j$ is likely to arrive $s$ before agent $k$ and need to wait for $k$ for a long time! We define such situations as \textit{inversions}. They are inefficient and lead to longer execution times of our TPG.

\begin{definition}[Inversion] \label{def:inversion}
    Suppose agent $j, k$ visit location $s$ at space-order vertex $v_j^{\tau_j} = (s, r_j)$ and $v_k^{\tau_k} = (s, r_k)$. They constitute an inversion if boolean value $\tau_j < \tau_k$ does not equal to $r_j < r_k$.
\end{definition}

Additionally, these inversions happen to be necessary components of a deadlock conflicts. A deadlock conflict of $n$ agents contains $n$ inversions and requires creating $2n$ high-level nodes which is very slow. Albeit under a different problem, \cite{offlinetimeind} also observed that cycle detection and resolution is a significant bottleneck during the search in their method. Therefore, avoiding inversions can significantly reduce the execution time of TPG as well as the planning time. 

\subsubsection{Changing Focal Queue Penalty} \label{sec:focalq}
In order to avoid inversions, for each low-level node, we modified its cost in Space-Order CBS's low-level focal queue to $f_{focal} = n_c + P * n_{inv}$, where $n_c$ is the number of conflicts on the path up to that node, $n_{inv}$ is the number of inversions, $P$ is a constant penalty. Empirically, we found that setting $P$ to a large value of $1000$ is most effective in practice.

\subsection{Negative Orders}
The behavior of SO-CBS could depends on the order that it plans paths for agents. Since the low-level search tries to minimize the order, an agent $j$ will visit order $0$ by default if no other agents visit the same location $s$. However later when another agent $k$ needs to visit $s$, it will have to visit at a higher order $r>0$ or conflict with agent $j$ at $(s,0)$. There is no way for agent $k$ to have a lower order than $j$, due to the planning order. This problem becomes significant when we incorporate the inversion penalty as the low-level search for $k$ would penalize $r>0$ if it leads to an inversion. Thus instead the search would prioritize finding a longer path to avoid $s$ or instead conflict with $j$ at order $0$, resulting in longer paths or significantly more conflicts and slowing down the search.

To solve this issue, we allow agents to visit ``negative orders", so that agents can obtain the first priority without incurring a conflict even if order $0$ is occupied. This implies we need to re-interpret order $r$ in a relative rather than an absolute sense. Practically, this requires modifications to both the low-level transition rules and the computation of $C_{total}$.

\subsubsection{Relative Order}
With negative values for $r$, the definition of $r=0$ as the $1$-st agent to visit does not make sense anymore. Rather, we will interpret $r$ in a relative manner. At a certain location, the agents' visitation priority will be sorted by their $r$ value, with lower $r$ value representing higher priority. The relative relationship is unchanged: an agent with a higher order must wait for all other agents who have a lower order at the same location.

\subsubsection{Low-Level Transitions with Negative Orders} \label{sec:transitions}
In Section \ref{sec:lowlevelpl}, in a vertex trainsition from $(s,r)$ to $(s',r')$, the low-level planner inserts all unconstrained $(s',r'), r' \in \{0, 1, ..., R'_{max}+1\}$. We only need to change that to inserting all $r' \in \{R'_{min}-1, ..., 0, 1, ..., R'_{max}+1\}$, where $R'_{max}$ and $R'_{min}$ are the maximum and minimum order at $s'$ among all existing paths of other agents. Also, we initialize agents at their start locations with $r=R_{start}$, where $R_{start}$ is a large negative value and is the minimum possible order.

\subsubsection{Changing $C_{total}$ Calculation}
Now that the absolute value of $r$ does not have concrete meaning, we cannot compute $C_{total}$ as the sum of $r$ over all vertices as in Definition \ref{def:cTotal}. Luckily, $C_{total}$ is equivalent to the total number of Type-2 edges in the TPG. We can easily compute it by counting the Type-2 edges at each vertex and taking summation.

\begin{definition}[Total Coordination] \label{def:cTotal-relative}
Given a single space-order vertex $v_j^\tau = (s, r_j)$, its total coordination $C_{total}(v_j^\tau)$ is the number of other agents $k$ visiting $(s, r_k)$ with lower order $r_k < r_j$. The total coordination of a TPG is the sum of the coordination of all vertices in all agents' paths: 
$C_{total} = \sum_j \sum_\tau C_{total}(v_j^\tau)$.
\end{definition}

\subsubsection{Still Complete}
Since we are not changing any of the constraints, we can reuse the same proof of completeness and retain identical completeness guarantees. 

\begin{figure*}[t!]
    \centering
    \includegraphics[width=0.95\textwidth]{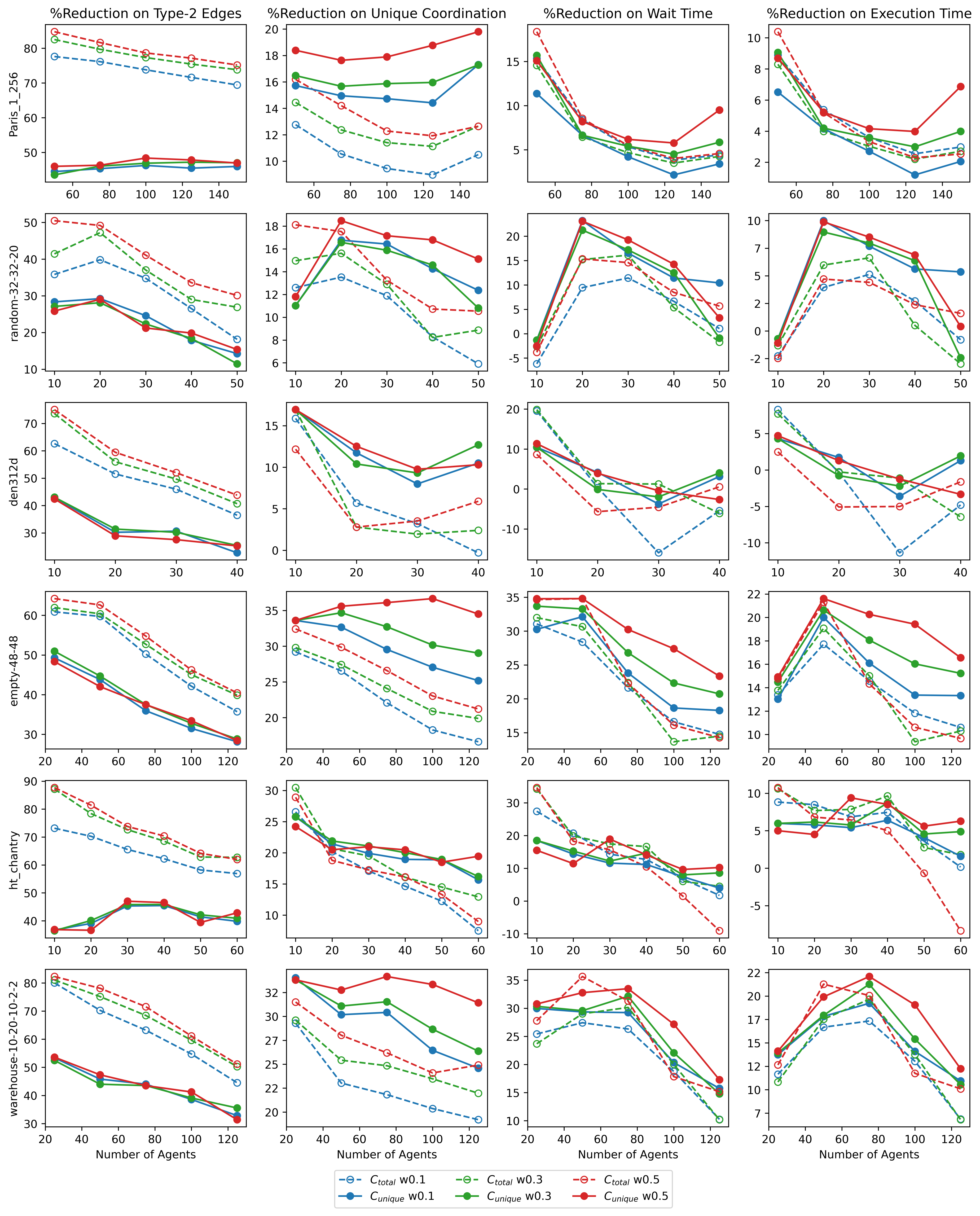}
    \caption{We compare SO-CBS with our total and unique objectives across different $w$ with ECBS-POST. ECBS-POST plans space-time paths using ECBS and then post-processes it into a TPG while our method directly produces a TPG. $w$ trades off minimizing coordination ($\uparrow w$) versus path length ($\downarrow w$). Our methods substantially decrease coordination compared to ECBS-POST (first two columns). The last two columns show SO-CBS is more robust under random delays by comparing wait time and execution time with ECBS-POST.}
    \label{fig:main-results}
    \vspace{-1em}
\end{figure*}

\begin{table*}[t]
\centering
\resizebox{0.99\textwidth}{!}{
\begin{tabular}{|c|ccc|ccc|ccc|ccc|}
\hline
\#Agents & 50 & 100 & 150 & 50 & 100 & 150 & 50 & 100 & 150 & 50 & 100 & 150 \\ \hline
Method & \multicolumn{3}{c|}{Execution Time} & \multicolumn{3}{c|}{\# Type-2 Edges} & \multicolumn{3}{c|}{Runtime (sec)} & \multicolumn{3}{c|}{Path Length} \\ \hline
ECBS-POST sub1.05   & 777 & 1287 & 1374 & 59.7 & 115 & 143 & 0.07 & 0.22 & 0.45 & 190 & 191 & 181 \\
ECBS-POST sub1.2    & 776 & 1295 & 1367 & 58.8 & 115 & 146 & 0.07 & 0.22 & 0.45 & 190 & 191 & 181 \\
SL-CBS w0.4         & 1052 & 1795 & 2151 & 41.3 & 81.9 & 108 & 0.39 & 1.49 & 3.48 & 190 & 191 & 181 \\ \hline
$C_{total}$ \ w0.1  & 708 & 1241 & 1333 & 13.4 & 30.3 & 43.9 & 1.08 & 7.16 & 48.6 & 190 & 193 & 185 \\
$C_{total}$ \ w0.3  & 712 & 1248 & 1337 & 10.4 & 26.3 & 37.6 & 1.31 & 12.1 & 92.9 & 191 & 194 & 186 \\
$C_{total}$ \ w0.5  & 696 & 1245 & 1339 & 9.10 & 24.7 & 35.7 & 1.51 & 23.1 & 112 & 192 & 197 & 190 \\ \hline
$C_{unique}$ \ w0.1 & 726 & 1253 & 1346 & 33.1 & 62.2 & 77.6 & 1.03 & 6.43 & 36.1 & 191 & 194 & 186 \\
$C_{unique}$ \ w0.3 & 706 & 1241 & 1319 & 33.6 & 61.5 & 76.0 & 1.10 & 6.46 & 45.9 & 191 & 194 & 186 \\
$C_{unique}$ \ w0.5 & 709 & 1234 & 1280 & 32.2 & 59.7 & 76.1 & 1.26 & 9.24 & 77.1 & 191 & 194 & 185 \\ \hline
\end{tabular}}
\caption{We compare average statistics of our SO-CBS methods against our baselines on Paris\_1\_256 across different numbers of agents (top row). As $w$ increases, we reduce average coordination (\# Type-2 edges) by planning slightly longer paths. This leads to an improved execution time (lower) in the face of delays, but at the expense of a longer planning time due to SO-CBS's larger search space. 
}
\label{tab:results}
\vspace{-1em}
\end{table*}


\section{Experimental Evaluation}
We experimentally show that SO-CBS produces valid TPGs that have less coordination between agents and are robust to random delays during execution. We test our methods on 6 different maps from \cite{stern2019mapfbenchmark}. For all experiments and methods, we use suboptimality $w_{so} = 1.2$ and a timeout of 2 minutes. Due to the very poor performance of the absolute order version of SO-CBS (inability to scale), we only show results of our relative order SO-CBS. We test using $C_{total}$ and $C_{unique}$ objectives with coordination cost weight $w \in \{0.1, 0.2, 0.3, 0.4, 0.5\}$. We only show results for $ w \in \{0.1, 0.3, 0.5\}$ in Figure \ref{fig:main-results} for visual clarity. The trend applies to $w \in \{0.2, 0.4\}$ as well.

Our main baseline ECBS-POST is ECBS with post-processing for the TPG. We also include SL-CBS and post-processing their space-level paths to a TPG as a second baseline. The post-processing TPG construction time is omitted as its runtime is negligible. We run ECBS with bypass conflicts and target reasoning enabled for a fair runtime comparison.
For SL-CBS, we set their cost weight $w = 0.4$, as a balanced trade-off between coordination and runtime efficiency which was the best balance they reported.

To test robustness with delays, we simulated execution with 5\% of the agents having 20\% random delays, with each delay having a duration of 100 timesteps. Agents are required to follow their TPG execution policy and we evaluate the wait time due to coordination and the total execution time summed over all agents. All results presented are averages across 10 random agent scenarios and 100 simulation seeds for sampling delays (thus each point averages 1000 runs).
We compare SO-CBS and the ECBS-POST baseline on three representative maps in Figure \ref{fig:main-results}. The SO-CBS results are shown with two different objectives and three different cost weights. Results on all five maps are shown in the appendix. The SL-CBS baseline consistently performs worse than SO-CBS and is hence omitted from Figure \ref{fig:main-results} but included in Table \ref{tab:results}.

\subsection{Reducing Coordination}
We evaluate two metrics for coordination in the TPG: the number of Type-2 edges (first column in Figure \ref{fig:main-results}), and the number of unique pairs of agents that coordinate (second column). Figure \ref{fig:main-results} highlights how Space-Order CBS significantly improves on both metrics of coordination across all maps compared to the ECBS-POST baseline. 
Optimizing $C_{total}$ reduces the total number of Type-2 edges by 30-90\% with large maps having better improvements. Optimizing $C_{unique}$ also helps in this metric although not to the same degree. For both objectives, increasing $w$ generally improves both coordination metrics. The coordination reduction also generally decreases as the number of agents increases across all maps. These observations provide evidence that SO-CBS fully utilizes free space to avoid coordinating between agents, and that in more congested maps has less flexibility to do so. The surprisingly large 80\% reductions additionally imply existing method paths have a large amount of potentially unnecessary overlap. Figure \ref{fig:motivating} visualizes this difference in two TPGs.

The main practical benefit of reducing coordination is decreasing communication between agents. Communication is a constraint that real-world multi-agent systems must handle and is non-trivial in practice with robots with limited bandwidth and range \cite{communicationReview2022}. Agents following TPGs require communication between agents to check if Type-2 edges are satisfied. Thus SO-CBS's reduction of $C_{total}$ directly decreases the communication burden for the MAPF system. We would like to highlight in conversations with two real-world warehouse companies, unreliable communication (e.g. spotty Wifi) was an issue that negatively impacted their performance. Thus, decreasing the communication load would likely help alleviate this issue.


We can trade off coordination for runtime and path length by manipulating $w$. Table \ref{tab:results} contains results for map Paris\_1\_256 whose trends are applicable to other maps. As $w$ increases from 0.1 to 0.5 with $C_{total}$, SO-CBS focuses more on reducing coordination with the improvement over ECBS-POST increasing from 70\% to 85\%, while the average path length increases only less than 5\%, implying that SO-CBS can avoid coordination by planning around. However, this ability comes at a computational cost. As shown in Section \ref{sec:lowlevelpl}, in low-level search, our method has a branching factor that is 3-10 times higher. With a larger search space, SO-CBS requires significantly longer planning time. Increasing $w$ also increases the run time as the planner needs to explore more paths to find solutions with lower coordination.

\subsection{More Robust to Delays during Execution}
Due to imperfect execution or kinematic constraints (e.g. rotating), agents can be delayed while executing the TPG plan. Individual delays can propagate to other agents that coordinate and wait for the delayed agents, incurring additional waiting. We experimentally show that SO-CBS produces TPGs that are more robust to such delays by comparing the waiting and execution time as seen in the third and fourth columns in Figure \ref{fig:main-results}. By having less coordination, SO-CBS reduces the wait time in most cases, which leads to an improvement in the overall execution time. Comparing the third and the last column, improvements in execution time preserve the trend in improvements in wait time, with values scaled down by about 30\% to 60\% depending on the contribution of wait time to the total execution time. Overall, the $C_{unique}$ objective performs better than $C_{total}$ and achieves an average of 16.2\% reduction in wait time and 9.1\% reduction in execution time at $w=0.5$. SO-CBS with $C_{unique}$ has a total execution time lower than ECBS-POST over 90\% of the time. EBCS-POST is an extremely strong state-of-the-art baseline (we ran it on a very low suboptimality and it timed out on lower ones we tried). Thus improving on average by 9.1\% is non-trivial. However as mentioned earlier, this comes at a computational and scalability cost.

Interestingly, we observe a clear positive correlation between the reduction of waiting time and $C_{unique}$. This supports our hope that with lower $C_{unique}$, each agent has fewer unique dependent agents. Subsequently, the probability that delays propagate onto dependent agents would be reduced. On the flip side, although the reduction values of $C_{total}$ are much higher, it does not align with waiting time. This supports the hypothesis that repeated coordination between the same pair of agents is over-penalized in $C_{total}$ and does not contribute to a higher probability of delay propagation than single coordination. Thus we conclude that $C_{unique}$ is a better metric that measures coordination with respect to the propagation of delays causing additional wait actions.



To further investigate how SO-CBS responds to different delay settings, we run simulations with varying ratios of delayed agents $\alpha$ and delay duration $t_d$. Figure \ref{fig:delay-duration} shows execution time reduction on empty\_48\_48 map with 50 agents with SO-CBS using $C_{unique}$ and cost weight 0.5. We expect when $t_d$ increases, there will be more delays propagated through coordination. Thus, the contribution of waiting to execution time will increase and enable more opportunities for improvement of SO-CBS over the baseline. Indeed, Figure \ref{fig:delay-duration} confirms our expectation, showing a consistent increase in reduction in execution time as $t_d$ increases. However, the contrary happens when the ratio of delayed agents $\alpha$ increases, especially when $t_d$ is large. This likely occurs as delayed agents are equally slow in expectation and are therefore less likely to propagate delays onto each other. 
But with few delayed agents, delays would almost certainly make all dependent agents that are not delayed wait. 
Nevertheless, SO-CBS is more robust across various settings of random delays. Its advantage is more significant as delay duration increases and the ratio of delayed agents decreases, with the execution time improvement as high as 30\% in some cases.

\begin{figure}[t!]
    \includegraphics[width=0.44\textwidth]{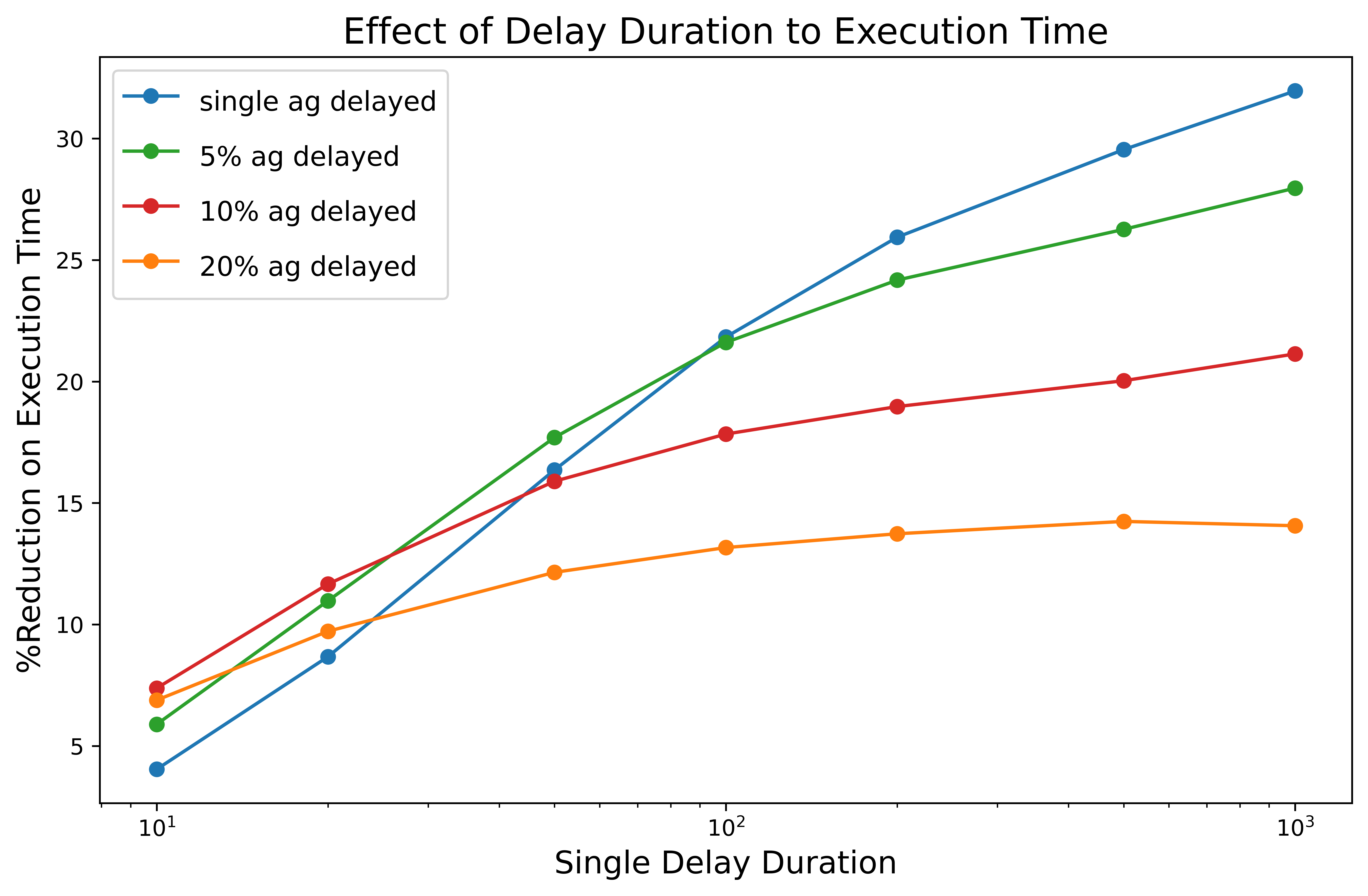}
    \vspace{-1.2em}
    \caption{empty\_48\_48\_map with 50 agents: We vary the ratio of delayed agents (colors) and delay duration (x-axis) and show the reduction of execution time (y-axis). The reduction in execution time increases as the delay duration increases.}
    \label{fig:delay-duration}
    \vspace{-1.5em}
\end{figure}

\section{Conclusion} \label{sec:conclusion}
Our work is an initial investigation into directly planning a TPG using space-visitation order planning. We first introduced our main theoretical contribution of reinterpreting TPGs as space-order paths that satisfy certain properties. We then designed unique constraints for Space-Order CBS to plan valid space-order paths, and described two variants (absolute order and relative order) with relative order SO-CBS performing convincingly better.
We experimentally demonstrate how SO-CBS can substantially reduce coordination and lead to TPGs more robust to a variety of delays.

Future work should explore different ways of computing a TPG as well as incorporate TPG post-processing methods. In particular, PIBT scales well in regular MAPF \cite{pibt} but requires modifications to satisfy space-order conflicts and coordination objectives. Additionally using more advanced TPG post-processing techniques like MCP \cite{mapfdelayprob} or BTPG \cite{su2024bidirectional} within the planning process could further decrease communication associated with coordination and increase robustness to delays. 



\bibliography{ref} 

\begin{thebibliography}{18}
\providecommand{\natexlab}[1]{#1}

\bibitem[{Atzmon et~al.(2018)Atzmon, Stern, Felner, Wagner, Bart{\'a}k, and Zhou}]{atzmon2018krobust}
Atzmon, D.; Stern, R.; Felner, A.; Wagner, G.; Bart{\'a}k, R.; and Zhou, N.-F. 2018.
\newblock Robust multi-agent path finding.
\newblock In \emph{Eleventh Annual Symposium on Combinatorial Search}.

\bibitem[{Barer et~al.(2014)Barer, Sharon, Stern, and Felner}]{barer2014suboptimal}
Barer, M.; Sharon, G.; Stern, R.; and Felner, A. 2014.
\newblock Suboptimal variants of the conflict-based search algorithm for the multi-agent pathfinding problem.
\newblock In \emph{Seventh Annual Symposium on Combinatorial Search}.

\bibitem[{Berndt et~al.(2020)Berndt, Duijkeren, Palmieri, and Keviczky}]{berndt2020feedback}
Berndt, A.; Duijkeren, N.~V.; Palmieri, L.; and Keviczky, T. 2020.
\newblock A Feedback Scheme to Reorder a Multi-Agent Execution Schedule by Persistently Optimizing a Switchable Action Dependency Graph.
\newblock arXiv:2010.05254.

\bibitem[{Boyrasky et~al.(2015)Boyrasky, Felner, Sharon, and Stern}]{bpEli2015}
Boyrasky, E.; Felner, A.; Sharon, G.; and Stern, R. 2015.
\newblock Don’t Split, Try To Work It Out: Bypassing Conflicts in Multi-Agent Pathfinding.
\newblock \emph{Proceedings of the International Conference on Automated Planning and Scheduling}, 25(1): 47--51.

\bibitem[{Gielis, Shankar, and Prorok(2022)}]{communicationReview2022}
Gielis, J.; Shankar, A.; and Prorok, A. 2022.
\newblock A Critical Review of Communications in Multi-Robot Systems.
\newblock \emph{CoRR}, abs/2206.09484.

\bibitem[{H{\"o}nig et~al.(2016)H{\"o}nig, Kumar, Cohen, Ma, Xu, Ayanian, and Koenig}]{honig2016tpg}
H{\"o}nig, W.; Kumar, T.~S.; Cohen, L.; Ma, H.; Xu, H.; Ayanian, N.; and Koenig, S. 2016.
\newblock Multi-agent path finding with kinematic constraints.
\newblock In \emph{Twenty-Sixth International Conference on Automated Planning and Scheduling}.

\bibitem[{Lam et~al.(2022)Lam, {Le Bodic}, Harabor, and Stuckey}]{bcp2022}
Lam, E.; {Le Bodic}, P.; Harabor, D.; and Stuckey, P.~J. 2022.
\newblock Branch-and-cut-and-price for multi-agent path finding.
\newblock \emph{Computers \& Operations Research}, 144: 105809.

\bibitem[{Li et~al.(2022)Li, Chen, Harabor, Stuckey, and Koenig}]{li2022mapf-lns2}
Li, J.; Chen, Z.; Harabor, D.; Stuckey, P.~J.; and Koenig, S. 2022.
\newblock MAPF-LNS2: Fast Repairing for Multi-Agent Path Finding via Large Neighborhood Search.
\newblock \emph{Proceedings of the AAAI Conference on Artificial Intelligence}, 36(9): 10256--10265.

\bibitem[{Li et~al.(2021)Li, Harabor, Stuckey, and Koenig}]{srli2021}
Li, J.; Harabor, D.; Stuckey, P.~J.; and Koenig, S. 2021.
\newblock Pairwise Symmetry Reasoning for Multi-Agent Path Finding Search.
\newblock \emph{CoRR}, abs/2103.07116.

\bibitem[{Li, Ruml, and Koenig(2021)}]{li2021eecbs}
Li, J.; Ruml, W.; and Koenig, S. 2021.
\newblock Eecbs: A bounded-suboptimal search for multi-agent path finding.
\newblock In \emph{Proceedings of the AAAI Conference on Artificial Intelligence (AAAI)}, 12353--12362.

\bibitem[{Ma, Kumar, and Koenig(2016)}]{mapfdelayprob}
Ma, H.; Kumar, T. K.~S.; and Koenig, S. 2016.
\newblock Multi-Agent Path Finding with Delay Probabilities.
\newblock arXiv:1612.05309.

\bibitem[{Okumura et~al.(2022{\natexlab{a}})Okumura, Bonnet, Tamura, and D{\'{e}}fago}]{offlinetimeind}
Okumura, K.; Bonnet, F.; Tamura, Y.; and D{\'{e}}fago, X. 2022{\natexlab{a}}.
\newblock Offline Time-Independent Multi-Agent Path Planning.
\newblock In \emph{Proceedings of the Thirty-First International Joint Conference on Artificial Intelligence}. International Joint Conferences on Artificial Intelligence Organization.

\bibitem[{Okumura et~al.(2022{\natexlab{b}})Okumura, Machida, Défago, and Tamura}]{pibt}
Okumura, K.; Machida, M.; Défago, X.; and Tamura, Y. 2022{\natexlab{b}}.
\newblock Priority inheritance with backtracking for iterative multi-agent path finding.
\newblock \emph{Artificial Intelligence}, 310: 103752.

\bibitem[{Pearl and Kim(1982)}]{focal1982}
Pearl, J.; and Kim, J.~H. 1982.
\newblock Studies in Semi-Admissible Heuristics.
\newblock \emph{IEEE Transactions on Pattern Analysis and Machine Intelligence}, PAMI-4(4): 392--399.

\bibitem[{Sharon et~al.(2015)Sharon, Stern, Felner, and Sturtevant}]{sharon2015cbs}
Sharon, G.; Stern, R.; Felner, A.; and Sturtevant, N.~R. 2015.
\newblock Conflict-based search for optimal multi-agent pathfinding.
\newblock \emph{Artificial Intelligence}, 219: 40--66.

\bibitem[{Stern et~al.(2019)Stern, Sturtevant, Felner, Koenig, Ma, Walker, Li, Atzmon, Cohen, Kumar, Boyarski, and Bartak}]{stern2019mapfbenchmark}
Stern, R.; Sturtevant, N.~R.; Felner, A.; Koenig, S.; Ma, H.; Walker, T.~T.; Li, J.; Atzmon, D.; Cohen, L.; Kumar, T. K.~S.; Boyarski, E.; and Bartak, R. 2019.
\newblock Multi-Agent Pathfinding: Definitions, Variants, and Benchmarks.
\newblock \emph{Symposium on Combinatorial Search (SoCS)}, 151--158.

\bibitem[{Su, Veerapaneni, and Li(2024)}]{su2024bidirectional}
Su, Y.; Veerapaneni, R.; and Li, J. 2024.
\newblock Bidirectional Temporal Plan Graph: Enabling Switchable Passing Orders for More Efficient Multi-Agent Path Finding Plan Execution.
\newblock \emph{AAAI Conference on Artificial Intelligence (AAAI)}, 38(16): 17559--17566.

\bibitem[{Wagner, Veerapaneni, and Likhachev(2022)}]{wagnerFLOW}
Wagner, A.; Veerapaneni, R.; and Likhachev, M. 2022.
\newblock Minimizing Coordination in Multi-Agent Path Finding with Dynamic Execution.
\newblock \emph{Proceedings of the AAAI Conference on Artificial Intelligence and Interactive Digital Entertainment}, 18(1): 61--69.

\end{thebibliography}

\clearpage
\appendix

\setcounter{figure}{0}
\renewcommand{\thefigure}{A\arabic{figure}}
\setcounter{table}{0}
\renewcommand{\thetable}{A\arabic{table}}

\section{Proving Space-Order CBS is Complete}
We more thoroughly prove Theorem \ref{thm:completeness} that Space-Order CBS is weakly complete (will find a solution if one exists given enough time and memory).
As mentioned in the proof sketch, we must first approach verifying completeness in CBS's framework by proving that when resolving a conflict through multiple constraints, if a solution existed in the original CT node then a solution exists in at least one successor CT node branch (i.e. the constraints are ``complete"). Since the root node trivially allows a solution (as there are no constraints), this means that each time we resolve a conflict in a node that allows a solution, a solution will still be allowed (not constrained) in at least one successor CT node. We thus need to prove that our constraints for resolving Vertex, Edge, Target, and Deadlock Conflicts via their corresponding constraints each satisfy the completeness property. This verifies that there exists a branch in the CT that contains a solution.

The second part of the proof is to justify that the low-level and high-level search will eventually expand all the nodes in the branch containing the solution until the solution is found (i.e. the searches are complete).

\subsection{Proving Completeness of Constraints}

A common way to prove that the solution has to exist in at least one successor CT node is to prove the contrapositive; show that if a solution does not exist in any node, it is not a valid solution. If a solution does not exist in any node, this means it violates the constraints of each (and therefore all) successor node. Thus we approach the proof of completeness for each conflict and set of constraints by assuming the solution violates all constraints and showing how this cannot lead to a valid solution. We repeat the constraints with their proof of completeness.

\setcounter{definition}{8}
\begin{definition}[Vertex Constraint]
    Vertex constraints occur when two agents $j,k$ share the same $(s,r)$ location, violating Definition \ref{def:cVertex}. We add a constraint on $j$ to avoid $(s,r)$ for one child node, and a constraint on $k$ likewise (similar to resolving a regular space-time vertex conflict).
\end{definition}
\begin{proof}[Proof of completeness]
    Suppose the solution violates both constraints. Then by violating the first constraint, $j$ is at $(s,r)$. By violating the second, $k$ is at $(s,r)$ as well. This violates the vertex criteria (Definition \ref{def:cVertex}) of a valid MAPF TPG.
\end{proof}
Note: This is identical to the proof for regular CBS vertex constraints.

\begin{definition}[Target Constraint]
    This occurs when agent $k$ passes over agent $j$ resting at its goal location $(s_j^{goal},r_j)$ violating Definition \ref{def:cTarget}. We then constrain $h$ to reach and rest at $(s_j^{goal},r \leq r_j)$ and have no other agent (including $j$) with $(s_j^{goal},r > r_j)$, or constrain $j$ to reach and rest at $(s_j^{goal},r > r_j)$.
    \label{def:target-constraint}
\end{definition}
\begin{proof}[Proof of completeness]
    Violating the first constraint results in some other agent $k$ at $(s_j^{goal},r > r_j)$. Violating the second constraint means $j$ occupies $(s_j^{goal}, r \not> r_j)$. This results in a violation of Definition \ref{def:cTarget} where $j$ will conflict with $k$ resting at its goal. 
\end{proof}
Note: This constraint/proof is identical to target reasoning for space-time CBS \cite{srli2021}.

\begin{definition}[Edge Constraint]
    In the case of Definition \ref{def:cEdge}.1, two agents $j,k$ swap relative orders in consecutive locations in the same direction, e.g. $(s,r_j) \rightarrow (s',r'_j), (s,r_k > r_j) \rightarrow (s',r'_k < r'_j)$
    , violating Definition \ref{def:cEdge}. 
    Instead, we add the following child constraints: 
    (a) Constrain $j$ to avoid the set of edges $(s, r \le r_j) \rightarrow (s',r \ge r'_k)$. 
    (b) Constrain $k$ to avoid set of edges $(s,r \ge r_j) \rightarrow (s', r \le r'_k)$.
    The analogous case in Definition \ref{def:cEdge} where two agents travel in opposite directions follows without loss of generality.
\end{definition}
\begin{proof}[Proof of completeness]
    For the first edge conflict, suppose the optimal solution violates both constraints. 
    From violating the first constraint, we have $j$ going $(s, r > r_j) \rightarrow (s', r < r'_k)$. 
    From violating the second constraint, we have $k$ going $(s, r < r_j) \rightarrow (s', r > r'_k)$. 

    This results in a violation of Definition \ref{def:cEdge}. Thus the optimal solution cannot lie outside both constraints and must be in one of the successor CT nodes. The same logic holds when applied to the second conflict with agents moving in opposite directions.
\end{proof}
Note: A ``regular" edge constraint like in space-time CBS is possible and maintains completeness using the normal completeness logic. However, we found that it performed poorly and our current edge constraints boosted performance (as it is a much stronger constraint avoiding multiple edges).

\begin{definition}[Cycle Constraints]
    Given a deadlock cycle in \ref{def:cDeadlock} involving $n$ Type-2 edges: $(s^i, r^i_{i-1}) \rightarrow (s^i, r^i_i)$ where $i = 0, 1, 2, ..., n-1$ and $r^i_{i-1} < r^i_i$, we can break the cycle if any of the Type-2 edges is broken. So we add 2 constraints for each $i$:
    (a) Agent $i-1$ cannot go to $(s^i, r < r^i_i)$ if it enter location $s^{i-1}$ before $s^i$.
    (b) Agent $i$ cannot go to $(s^i, r \ge r^i_i)$.
    \label{def:deadlock-constraint}
\end{definition}
\begin{proof}[Proof of completeness]
    Assume for contradiction a solution exists that violates all $2n$ constraints. In this solution with $n$ agents $i = 0, 1, ..., n-1$, each agent $i$ must have order $r \ge r^i_i$, while agent $i-1$ have order $r < r^i_i$, i.e. agent $i$ must wait for $i-1$. Additionally each agent $i$ must first visit $s_i$, then $s_{i+1}$. We observe that these conditions reduce to the exact same condition in Definition \ref{def:cDeadlock}, implying that the same deadlock exists in this solution that violates all the constraints, which is a contradiction as a solution cannot have any deadlock cycles. Thus our $2n$ constraints and branching process for deadlock is complete.
\end{proof}

\subsection{Proving Weak Completeness of Search}
From the completeness of constraints, we know that a solution exists in the root node and will exist in at least one branch of the CT when applying constraints to generate successor CT nodes. By definition, a valid solution has a finite set of constraints and thus will be found if we expand the branch fully. We thus need to prove that the searches will eventually find the solution as opposed to something like DFSing arbitrarily bad branches.




First, we observe that since the low-level search is best first in respect to the (weighted sum) objective, the cost of every successor node must be at least as large as the cost of the parent node. This occurs as adding a constraint only reduces the search space for the low-level search.

Second, a valid solution must have a finite cost. Therefore starting from the root CT node, our high-level search will expand all nodes whose costs are less then this finite cost until it expands the solution node. Thus, Space-Order CBS will return a solution if one exists. However, if there is no solution, Space-Order CBS (like regular CBS), will keep on searching infinitely and never return no solution, and is thus only weakly complete.

We note that GCBS, ECBS, EECBS and other variants of CBS employ this same proof given different search objectives (e.g. ECBS minimizing conflicts then costs) rather than minimizing just the cost. Therefore our above Space-Order logic follows identically despite the different weighted sum objective.

\section{Details on Deadlock Conflicts}
\begin{figure}[t]
    \includegraphics[width=0.48\textwidth]{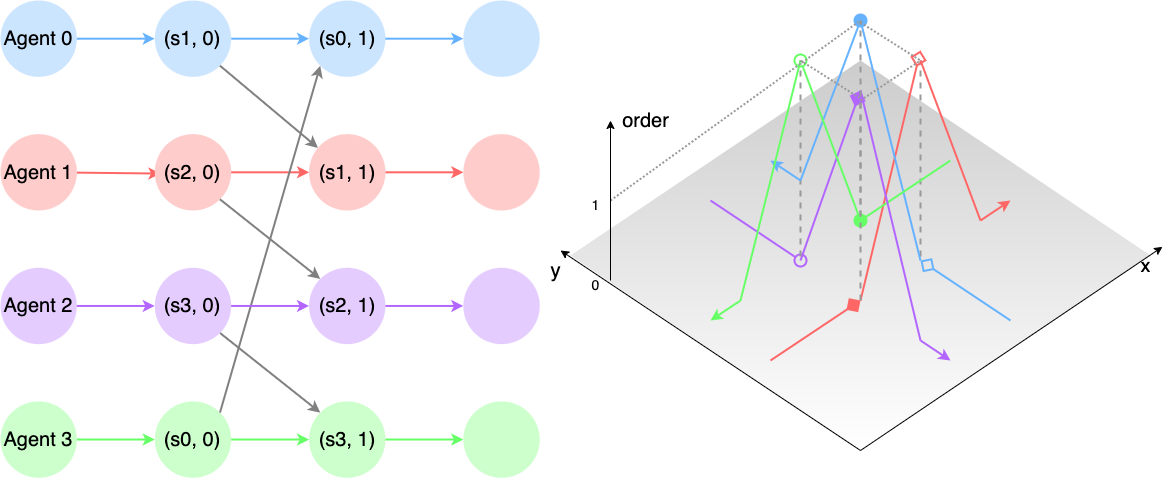}
    \caption{An example of a deadlock cycle as agents rotating in place is not allowed according to the definition of a Type-2 edge \cite{honig2016tpg}. The left subplot shows the edge perspective TPG for the scenario, with Type-1 edges denoted in color and Type-2 edges denoted in grey edges. The right subplot shows the same scenario depicted in the space-order 3D space. The blue, red, purple, and green trajectory show the space-order paths for agent 0, 1, 2, and 3 respectively. The grey vertical dotted lines represent coordination at the same location, corresponding to the grey Type-2 edges in the left subplot. There are different paired shapes at the coordinated points on the paths. The same shapes with different colors represent two agents coordinating at the same location in different orders. All agents will be stuck at the second node in their paths, as all of them have order 1 in the next node and are cyclically dependent. Concretely, green needs to wait on purple, which waits on red, which waits on blue, which in turn waits for purple, leading to deadlock.
    }
    \label{fig:deadlock-example2}
    \vspace{-1.5em}
\end{figure}

In Definition \ref{def:cDeadlock}, we define deadlock criteria with a set of $n$ agents $A$ and a set of $N$ locations $S$, which satisfy two conditions. The first condition states that each agent $a_n \in A$ needs to wait for the previous agent $a_{n-1}$ at a location $s^n \in S$. The second condition states that each agent $a_n$ must wait for the previous agent $a_{n-1}$ at $s^n$ before they can enter and leave $s^{n+1}$ to resolve the waiting for the next agent $a_{n+1}$. If both conditions are satisfied, agents in $A$ form a deadlock and will be stuck waiting for each other. 

This criteria covers most of the possible deadlocks. However, there is one additional subtle edge case of deadlock that does not fall under the above criteria. We avoid discussing this edge case in the main paper because it is technically involved. Now we discuss the edge case and present a formal definition of deadlock criteria that covers all cases. The edge case derives from the fact that TPG disallows following between agents. That is, if agent 1 waits for agent 2 at location $s$ and agent 2 \textbf{leave} $s$ at time $t$ during execution, agent 1 can only enter $s$ \textbf{after} $t$, i.e. starting from the next time step $t+1$. Now consider the example depicted in Figure \ref{fig:deadlock-example2}, following the notation in Definition \ref{def:cDeadlock}, a set of four agents $A = \{ a_n | n=0,1,2,3\}$ each occupies a location in a $2 \times 2$ grid and they try to rotate around the grid. The set of locations is $S = \{s^n |n=0,1,2,3\}$. In vertex-perspective TPG, they have paths $a_0: (s^1, 0) \rightarrow (s^0, 1)$, $ a_1 :  (s^2, 0) \rightarrow (s^1, 1)$, $a_2 : (s^3, 0) \rightarrow (s^2, 1)$ and $a_3 : (s^0, 0) \rightarrow (s^3, 1)$. Since TPG requires agents can only move to order $1$ after the agent who holds order $0$ leaves, agent $a_3$ is stuck at $(s^0, 0)$ waiting for agent $a_2$ to leave $s^3$, who is stuck at $s^3$ waiting for agent $a_1$, who waits for $a_0$ and eventually the waiting propagate to $a_3$ itself. Therefore, the 4 agents in the above situations are in a deadlock.

Let $r^n_m$ be the order of agent $a_m$ at location $s^n$. In the above situation, for $i=0,1,2,3$, $r^n_n = 1, r^n_{n-1}=0$, so $r^n_n > r^n_{n-1}$, satisfying condition 1 in Definition \ref{def:cDeadlock}. However, observe that agent $a_0$ visits $s^1$ then $s^0$, which violates condition 2 in Definition \ref{def:cDeadlock}. Note that in this situation, all agents $a_n$ visits $s^{n+1}$ then immediately visits $s^n$. Therefore, we generalize Definition \ref{def:cDeadlock} as follows:

\setcounter{definition}{11}
\begin{definition}[Complete Cycle Criteria]
A deadlock cycle of $N$ agents consists of $N$ Type-2 edges (i.e. coordination) at $N$ different locations.
Let $S = \{s^n | n = 0, 1, ... N-1 \}$ be a set of $N$ distinct locations and $A = \{a_n | n = 0, ... N-1\}$ a set of distinct agents. Index $n$ is in module $N$. Let $r^n_m$ be the visitation order of agent $a_m$ at location $s^n$. Sets $S$ and $A$ form a deadlock cycle when
\begin{enumerate}
    \item $\forall n$, agents $a_n, a_{n-1}$ visit $s^n$ with $r^n_n > r^n_{n-1}$. That is, agent $a_n$ must wait for agent $a_{n-1}$ to enter location $s^n$.
    \item $\forall n$, agent $a_n$ satisfy one of the following:
    \begin{enumerate}
        \item $a_n$ visits $s^n$ then $s^{n+1}$ 
        \item $a_n$ visits $s^{n+1}$ and then immediately visits $s^n$
    \end{enumerate}
\end{enumerate}
This forms a cycle of agents stuck waiting for each other at the Type-2 edges.
\end{definition}
Note that while in the above example, all agent falls into the edge case, it is possible for a deadlock to have part of agents in $A$ satisfy condition 2(a) while the others satisfy 2(b). Following the complete definition of deadlock criteria, we generalize Definition \ref{def:deadlock-constraint} as follows:

\begin{definition}[Complete Cycle Constraints]
    Given a deadlock cycle in \ref{def:cDeadlock} involving $n$ Type-2 edges: $(s^n, r^n_{n-1}) \rightarrow (s^n, r^n_n)$ where $n = 0, 1, 2, ..., N-1$ and $r^n_{n-1} < r^n_n$, we can break the cycle if any of the edges is broken. So we add 2 constraints for each $n$:
    (a) Agent $n-1$ cannot go to $(s^n, r < r^n_n)$ if it visit location $s^{n-1}$ then visit $s^n$ or if it visit $s^n$ then immediately visit $s^{n-1}$.
    (b) Agent $n$ cannot go to $(s^n, r \ge r^n_n)$.
    \label{def:deadlock-constraint}
\end{definition}

The proof for completeness follows identically as done previously.

\end{document}